\documentclass[11pt]{article}

\usepackage[utf8]{inputenc}
\usepackage[T1]{fontenc}
\usepackage[english]{babel}
\usepackage[letterpaper]{geometry}
\usepackage{mathptmx}
\usepackage{amsmath,amssymb,amsthm,latexsym}
\usepackage{booktabs}
\usepackage{graphicx}
\usepackage{xcolor}
\usepackage{array}
\usepackage[font=normalsize,skip=2pt]{caption}
\usepackage{titlesec}
\usepackage{setspace}
\usepackage[round,authoryear]{natbib}
\usepackage[hypertexnames=false,linktocpage=true]{hyperref}
\hypersetup{colorlinks=true,allcolors=black,bookmarksnumbered=true,pdfview=FitB,pdfauthor={Serhii Zabolotnii},pdftitle={A Transferability Criterion for Null-Optimized Variance Reduction in Cumulant-Based Error-Independence Testing}}

\titleformat*{\section}{\normalsize\bfseries}
\titleformat*{\subsection}{\normalsize\bfseries}
\titleformat*{\subsubsection}{\normalsize\bfseries}
\titleformat*{\paragraph}{\normalsize\bfseries}
\titleformat*{\subparagraph}{\normalsize\bfseries}
\titleformat{\section}{\sffamily\normalfont\normalsize\bfseries}{\thesection.}{1em}{}
\titleformat{\subsection}{\sffamily\slshape\normalfont\bfseries}{\thesubsection.}{1em}{}
\titleformat{\subsubsection}{\sffamily\slshape\normalfont\bfseries}{\thesubsubsection.}{1em}{}
\titlespacing*{\section}{0pt}{18pt}{6pt}
\titlespacing*{\subsection}{0pt}{18pt}{6pt}
\titlespacing*{\subsubsection}{0pt}{18pt}{6pt}
\newtheorem{theorem}{Theorem}

\newtheorem{proposition}[theorem]{Proposition}
\newtheorem{corollary}[theorem]{Corollary}
\newtheorem{remark}{Remark}

\newcommand{\E}{\mathbb{E}}
\newcommand{\Var}{\operatorname{Var}}
\newcommand{\Cov}{\operatorname{Cov}}

\newcommand{\DcPMM}{\widehat{\Delta c}_3^{\,\text{PMM2}}}
\newcommand{\DcNaive}{\widehat{\Delta c}_3^{\,\text{naive}}}

\begin{document}

\begin{center}
{\Large\bfseries A Transferability Criterion for Null-Optimized Variance Reduction in Cumulant-Based Error-Independence Testing\par}
\vspace{12pt}
{\normalsize Serhii Zabolotnii\(^{1,2,3}\)\par}
\vspace{4pt}
{\small \(^{1}\)Cherkasy State Business College, Cherkasy 18028, Ukraine\par}
{\small \(^{2}\)State Scientific Research Institute of Armament and Military Equipment Testing and Certification, Cherkasy, Ukraine\par}
{\small \(^{3}\)Uzhhorod National University, Uzhhorod, Ukraine\par}
\vspace{4pt}
{\small ORCID: \url{https://orcid.org/0000-0003-0242-2234}\par}
\vspace{4pt}
{\normalsize Corresponding author: \texttt{zabolotnii.serhii@csbc.edu.ua}\par}
\end{center}

\vspace{18pt}

\textbf{Abstract}\vspace{8pt}\\
Control-variate and polynomial-maximization (PMM) estimators are optimized at a
single fixed distribution, yet they are increasingly proposed to strengthen
hypothesis tests, which decide between two regions of a parameter family. We give
a closed-form criterion for when this transfer succeeds. For an $H_0$-centered
augmentation of a target moment statistic with null-optimized weight vector
$\mathbf{K}_0$, the alternative-side expectation equals the target plus
${\mathbf{K}_0}^{\top}\boldsymbol{\mu}_{a,H_1}$, where
\(\boldsymbol{\mu}_{a,H_1}\) is the alternative-side mean of the augmenting
basis. Null-variance reduction therefore transfers without bias only under the
orthogonality condition
${\mathbf{K}_0}^{\top}\boldsymbol{\mu}_{a,H_1}=0$; requiring each augmenting
function to remain mean-zero is sufficient but not necessary. We instantiate the
criterion on the recently proposed Wiedermann--Shi third-order cumulant test for
measurement-error independence. A second-order PMM correction is unbiased and
lower-variance under the null (relative efficiency $\geq 1$ in all 36
conditions; aggregated mean ARE values 1.23--5.16; Type-I 0.04--0.09), yet
provably inconsistent under the alternative: the antisymmetric polynomial
auxiliaries acquire nonzero means, attenuating the target by a closed-form factor
and costing 7--52 percentage points of power, worst where the test is strongest
and worsening under heavy tails. A fourth-order variant reduces variance (ratio
1.127) but fails a nuisance guard (rejection 0.295 versus 0.10). We derive a
reusable alternative-consistency acceptance gate for variance-reduced test
statistics.\vspace{18pt}

\normalsize
\textbf{Key Words:} cumulant-based inference; measurement-error independence; control-variate transferability; alternative consistency; Monte Carlo simulation; polynomial variance reduction. \vspace{12pt}

\normalsize
\textbf{Mathematical Subject Classification:} 62F03, 62F05, 62H15, 62J05.

\section{Introduction}
\label{sec:intro}

Tests of error independence sit at the intersection of psychometric
measurement theory, higher-order moment statistics, and efficient
simulation-based inference. In classical and congeneric measurement
models, reliability claims depend not only on the observed covariance
matrix but also on assumptions about the structure of measurement
errors \citep{cronbach1951alpha,lord1968statistical,
joreskog1971congeneric,mcdonald1999test,raykov1997composite,
revelle2009coefficients}. Design-driven or substantively meaningful
correlated residuals are therefore not a cosmetic modelling detail; if
left unmodelled, they can change the interpretation of latent-variable
and score-comparison analyses \citep{cole2007residuals}.

Within structural equation modelling, nonnormality and residual
dependence have long motivated robust, asymptotically distribution-free,
and bootstrap corrections for covariance-structure inference
\citep{browne1984adf,bollen1989sem,bollen1992bootstrap}. These methods
protect fit assessment and standard errors, but they do not by
themselves answer the more targeted question considered here: whether a
signed higher-order departure in paired measures can reveal asymmetric
error dependence.

The present paper studies a non-Gaussian route into that problem. The
statistics are built from third- and fourth-order products, so they also
belong to the older cumulant and moment-structure tradition running from
Pearson-type distributional diagnostics to multivariate skewness,
kurtosis, and structural moment estimation
\citep{pearson1900philosophical,kendall1977advanced,
mccullagh1987tensor,mardia1970measures,bentler1983moment,
cain2017nonnormality}. Against this background, the construction of more
efficient estimators by adjoining covariance-bearing auxiliary statistics
is a standard variance-reduction technique, with deep roots in Monte
Carlo methods \citep{lavenberg1981,glynn2002,glasserman2004montecarlo,
owen2013} and an econometric specialization to generalized method of
moments \citep{hansen1982,newey1994,hall2005gmm}.

In the Ukrainian polynomial-statistics school, the same efficiency idea
is formalized as the \emph{stochastic polynomial method} (PMM) of
\citet{kunchenko2002}: an estimator of a target functional
$\tau$ is written as a linear combination of sample means of basis
functions, with coefficients chosen to minimize variance subject to
unbiasedness. The second-order version (PMM2) operates on basis functions
involving moments of order up to four; its central tool is the normal
system $\mathbf{F}\mathbf{K} = \mathbf{B}$, where $\mathbf{F}$ is the
Gram matrix of basis covariances and $\mathbf{B}$ is the sensitivity
vector. PMM2 has been used productively in single-population parameter
estimation contexts (location and scale estimation under non-Gaussian
errors), with documented asymptotic relative efficiency gains over the
sample mean. A practical R implementation of this estimation line is
available through the EstemPMM package \citep{EstemPMM2026}.

Recently, \citet{wiedermann2026} (hereafter W\&S) proposed cumulant-based
statistics for detecting asymmetric correlated measurement errors
between two parallel observed scores in classical test theory. Their
third-order statistic is the cross-cumulant difference
\begin{equation}
  \Delta c_3 = \E[X_1 X_2^2] - \E[X_1^2 X_2],
  \label{eq:dc3pop}
\end{equation}
estimated by the moment-of-products form (after mean-centering)
\begin{equation}
  \DcNaive = \frac{1}{N} \sum_{i=1}^{N} \bigl( x_{1i} x_{2i}^2 - x_{1i}^2 x_{2i} \bigr).
  \label{eq:naive}
\end{equation}
Under $H_0$ (errors exchangeable across the two measures), $\Delta c_3 =
0$; under $H_1$ (asymmetric correlation through a shared skewed
confounder), $\Delta c_3 = R(R-1)\kappa_3(U) \neq 0$, where $R$ is the
relative loading and $\kappa_3(U)$ the third cumulant of the latent
confounder. Inference is via percentile bootstrap
\citep{efron1994introduction,davison1997bootstrap}. The estimator
$\DcNaive$ is unbiased and $\sqrt{N}$-consistent but, as W\&S note, can
be noisy at moderate $N$ and skewed score distributions, particularly in
boundary regimes ($R$ close to 1 and small $|\kappa_3(U)|$).

This target is narrower than omnibus nonparametric independence testing.
Hoeffding-type and distance-covariance tests detect broad departures
from independence \citep{hoeffding1948independence,szekely2007distance},
whereas the W\&S statistic exploits a signed cross-cumulant identity
inside a measurement-error model. The present evaluation is therefore
not a contest with general dependence tests; it asks whether polynomial
variance-reduction and adaptive polynomial bases preserve or improve the
specific W\&S testing signal.

We give a closed-form criterion for when an $H_0$-centered control-variate
augmentation transfers to a two-region test: the null-optimized weight
vector must be orthogonal to the alternative-side mean of the augmenting
basis. Individual mean-zero auxiliaries under \(H_1\) are one sufficient
route to this condition, but not the only one. From this we derive an
operational acceptance gate that checks transferability before
deployment. The Wiedermann--Shi statistic is the worked instance.

A natural-looking first proposal --- which the present paper investigates
in detail and ultimately rejects --- is to apply PMM2 to
\eqref{eq:naive} by treating it as a sample mean of the basis function
$\varphi_1 = X_1 X_2^2 - X_1^2 X_2$ and adjoining moment-product
antisymmetric auxiliaries \(X_1^3-X_2^3\),
\(X_1^3X_2-X_1X_2^3\), and \(X_1^4-X_2^4\) to construct a
lower-variance combination. The motivation is straightforward: ARE under
$H_0$ should be unity or better, and any non-Gaussianity in the
marginals should be exploited automatically.

This first proposal exposes a sharp transfer boundary, which we
characterize in closed form. The estimator we construct is well-defined,
mathematically self-consistent, and indeed delivers ARE $\geq 1$ in all
36 conditions (aggregated mean ARE 1.23--5.16; Table~\ref{tab:are}); the
largest variance ratios occur in alternative cells where the estimator is
already badly biased, so they are not efficiency gains. The estimator is
\emph{inconsistent under the alternative}: its
expectation under $H_1$ does not converge to the true $\Delta c_3$ as
$N \to \infty$. In hypothesis-testing language, the centering required
to make PMM2 well-defined under $H_0$ destroys the consistency property
that is needed under $H_1$. The result is a systematic attenuation that
costs 7--52 percentage points of power across the W\&S simulation
design.

We deliberately retain this as the theoretical core of the paper. The
diagnostic value is twofold. First, the failure mode is a clean instance of a
\emph{category error}: a variance-reduction technique developed for
single-population parameter estimation is applied to a two-sample
testing problem, where the auxiliary functionals have different
expectations under the two hypotheses. The pattern is not specific to
PMM: it arises whenever a control-variate basis is centered at its
$H_0$-expected value but the data realize the alternative, and is
discussed in the control-variate remedies literature by
\citet{nelson1990}. What is new in the two-region testing setting is
that the alternative-side auxiliary mean is alternative-dependent and
unknown; null optimization supplies no reason for
${\mathbf{K}_0}^{\top}\boldsymbol{\mu}_{a,H_1}$ to vanish. In the scalar
W\&S instance below, preservation of the target requires \(K^*=0\);
the nonzero null-optimal weight instead attenuates the signal.
Second, the simulation reveals a concrete pitfall
in methodological evaluation: $H_0$-only unit tests (which is what we
originally wrote) cleared PMM2 as ``correct''. Adversarial $H_1$ tests
would have flagged the inconsistency in minutes. The lesson generalizes
beyond cumulant statistics.

The broader study then asks whether this failure is specific to PMM2 or
marks a boundary of PMM-style variance reduction for the W\&S task. We
therefore keep the follow-up evidence inside the PMM/PATP line. A
PMM3-style symmetric correction of the fourth-order W\&S cumulant
statistic, motivated by the PMM3 fourth/sixth-cumulant efficiency
criterion, reduces variance but fails practical nuisance-aware testing
gates. The Parametrically Adaptive Transition Polynomial (PATP) is then
used as a conceptual extension point: its signed-parity continuous-\(\alpha\)
basis \citep{Zabolotnii2026PATP} gives a disciplined way to vary the
polynomial basis, but it does not remove the need for
\(H_1\)-consistency. This makes the manuscript a focused PMM boundary
paper, with PATP stated as the next admissible basis-level direction
rather than as an unvalidated positive detector.

\paragraph{Outline.}
Section~\ref{sec:setup} fixes the measurement model. Sections~\ref{sec:methods}
and~\ref{sec:misapply} state the PMM2 construction and the central
inconsistency diagnosis. Section~\ref{sec:simulations} reports the Monte
Carlo evidence; Section~\ref{sec:routes} summarizes the PMM3 diagnostic
and the PATP extension perspective. Section~\ref{sec:discussion} connects
the result to control variates, GMM, and hypothesis testing, and
Section~\ref{sec:conclusion} concludes.

\section{Setup}
\label{sec:setup}

\subsection{Measurement model}
\label{sec:model}

Following \citet{wiedermann2026}, let
\begin{equation}
  X_i = \lambda T + E_i, \quad i = 1, 2,
  \label{eq:model}
\end{equation}
where $T$ is a latent true score with $\E[T] = 0$, $\Var(T) = \sigma_T^2$,
and the errors decompose as $E_i = a_i U + W_i$. The latent confounder
$U$ satisfies $\E[U] = 0$, $\Var(U) = 1$, third cumulant $\kappa_3(U)$;
the idiosyncratic noises $W_i$ have $\E[W_i] = 0$, $\Var(W_i) =
\sigma_W^2$, are mutually independent and independent of $T$, $U$. The
\emph{signal ratio} parameter is $a_1 = 1$, $a_2 = R$.

Under this model the third-order cross-cumulant difference satisfies
\begin{equation}
  \Delta c_3 = \E[X_1 X_2^2] - \E[X_1^2 X_2]
  = a_1 a_2 (a_2 - a_1)\kappa_3(U)
  = R(R-1)\kappa_3(U).
  \label{eq:dc3popnew}
\end{equation}
The W\&S testing problem is
\[
  H_0\colon\; \Delta c_3 = 0 \qquad \text{vs} \qquad
  H_1\colon\; \Delta c_3 \neq 0.
\]
Note $H_0$ holds in two operationally distinct ways: $R = 1$
(exchangeability) or $\kappa_3(U) = 0$ (Gaussian or symmetric
confounder). The naive estimator \eqref{eq:naive} is unbiased and
$\sqrt{N}$-consistent under both $H_0$ and $H_1$, and bootstrap
inference is standard \citep{efron1994introduction,davison1997bootstrap}.

\subsection{Exchangeability and basis antisymmetry}
\label{sec:exch}

A property used implicitly throughout the PMM2 construction is that the
joint distribution of $(X_1, X_2)$ is exchangeable under $H_0$. This
follows from \eqref{eq:model} because, when either $R = 1$ or
$\kappa_3(U) = 0$, the joint distribution of $(E_1, E_2)$ is
exchangeable up to third order, hence so is $(X_1, X_2)$. Consequently
every antisymmetric polynomial in $(X_1, X_2)$ has expectation zero
under $H_0$. This is what makes $\varphi_1, \varphi_3, \varphi_4,
\varphi_7, \varphi_8$ in Table~\ref{tab:basis} below valid building
blocks of an $H_0$-centered PMM2 estimator: their unknown population
means are zero by symmetry, not by estimation. Under $H_1$, however,
exchangeability fails and these antisymmetric functionals develop
nonzero expectations. This single observation is the source of all
that follows.

\section{The PMM2 Estimator and its Variance Reduction Under \texorpdfstring{$H_0$}{H0}}
\label{sec:methods}

We retain the original derivation in full; the mathematics is sound,
and the variance reduction it predicts is real (Section
\ref{sec:simulations}). The framing as a working tool for the W\&S task
is what we ultimately reject (Section~\ref{sec:misapply}).

\subsection{Basis functions}
\label{sec:basis}

Mathematically, PMM treats the statistic of interest as the first element
of a larger vector of stochastic-polynomial sample means. If
$\varphi_1$ is the target influence function and
$a_1,\ldots,a_q$ are auxiliary polynomials with known expectations, PMM
constructs
$\bar{\varphi}_1+\mathbf{K}^{\top}(\bar{\mathbf{a}}-\boldsymbol{\mu}_a)$
and chooses $\mathbf{K}$ to reduce variance without changing the target
expectation. In the implemented PMM2 construction below we use only the
antisymmetric augmenting functions whose $H_0$ expectations are known to
be zero by exchangeability. Symmetric cross-products and nonzero-mean
fourth-order terms were examined during development but are not part of
the reported estimator, because they require nuisance moment centerings
that would obscure the diagnostic mechanism.

\begin{table}[htbp]
  \centering
  \caption{Basis functions used in the implemented PMM2 estimator of
    $\Delta c_3$. The auxiliary elements are antisymmetric in
    $(X_1,X_2)$ and therefore have zero $H_0$ expectation under
    exchangeability.}
  \label{tab:basis}
  \small
  \begin{tabular}{cllc}
    \toprule
    Symbol & Function of $(X_1, X_2)$ & Role & $\E_{H_0}[\cdot]$ \\
    \midrule
    $\varphi_1$ & $X_1 X_2^2 - X_1^2 X_2$ & target (naive $\Delta c_3$) & $0$ \\
    $a_1$ & $X_1^3 - X_2^3$ & primary cubic auxiliary & $0$ \\
    $a_2$ & $X_1^3 X_2 - X_1 X_2^3$ & fourth-order cross auxiliary & $0$ \\
    $a_3$ & $X_1^4 - X_2^4$ & fourth-order marginal auxiliary & $0$ \\
    \bottomrule
  \end{tabular}
\end{table}

\subsection{Normal system and variance bound}
\label{sec:normal}

Let $\mathbf{a}=(a_1,a_2,a_3)^{\top}$ denote the three augmenting
functions in Table~\ref{tab:basis}. Under $H_0$, write the block
covariance matrix of $(\varphi_1,\mathbf{a}^{\top})^{\top}$ as
\begin{equation}
  \boldsymbol{\Sigma}
  =
  \begin{pmatrix}
    \sigma_1^2 & \mathbf{b}^{\top} \\
    \mathbf{b} & \mathbf{F}
  \end{pmatrix},
  \qquad
  \sigma_1^2=\Var_{H_0}(\varphi_1),\quad
  \mathbf{b}=\Cov_{H_0}(\mathbf{a},\varphi_1),\quad
  \mathbf{F}=\Cov_{H_0}(\mathbf{a}).
  \label{eq:blockcov}
\end{equation}
The sign convention follows the control-variate form used in the code:
\begin{equation}
  \mathbf{K}^* = -\mathbf{F}^{-1}\mathbf{b},
  \qquad
  \DcPMM = \bar{\varphi}_1 + {\mathbf{K}^*}^{\top}\bar{\mathbf{a}},
  \label{eq:pmm2est}
\end{equation}
because $\E_{H_0}[\mathbf{a}]=\mathbf{0}$ by exchangeability.

\begin{theorem}[Variance bound under $H_0$]
  \label{thm:g2}
  Under $H_0$ with the antisymmetric auxiliaries in
  Table~\ref{tab:basis}, $\DcPMM$ is unbiased and
  \[
    g_2 := \frac{\Var_{H_0}(\DcPMM)}{\Var_{H_0}(\DcNaive)}
    = 1 - \frac{\mathbf{b}^{\top}\mathbf{F}^{-1}\mathbf{b}}{\sigma_1^2} \in [0,1].
  \]
\end{theorem}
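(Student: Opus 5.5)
The plan is to treat $\DcPMM$ as a standard control-variate estimator with population-optimal weights $\mathbf{K}^*$ computed from the $H_0$ covariance structure \eqref{eq:blockcov}, treated as fixed (non-estimated). Unbiasedness comes first and is immediate from linearity:
\[
\E_{H_0}[\DcPMM]=\E_{H_0}[\bar\varphi_1]+{\mathbf{K}^*}^{\top}\E_{H_0}[\bar{\mathbf{a}}].
\]
Here $\E_{H_0}[\bar\varphi_1]=\Delta c_3=0$, and $\E_{H_0}[\bar{\mathbf{a}}]=\mathbf{0}$ by the exchangeability argument of Section~\ref{sec:exch}, since each $a_j$ is antisymmetric. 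One caveat I would state explicitly is that $a_2,a_3$ are fourth-order. The exchangeability claim in Section~\ref{sec:exch} was phrased ``up to third order'', so for the fourth-order auxiliaries I would assume full distributional exchangeability of $(X_1,X_2)$ under $H_0$. That holds, for example, when $R=1$.

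For the variance, I would use i.i.d.\ sampling. With $N$ observations, $\Var(\bar\varphi_1+\mathbf{K}^{\top}\bar{\mathbf{a}})=N^{-1}\Var(\varphi_1+\mathbf{K}^{\top}\mathbf{a})$, so the $N^{-1}$ cancels in the ratio $g_2$. Expanding the quadratic form with the blocks of $\boldsymbol\Sigma$ gives
\[
\Var_{H_0}(\varphi_1+\mathbf{K}^{\top}\mathbf{a})=\sigma_1^2+2\mathbf{K}^{\top}\mathbf{b}+\mathbf{K}^{\top}\mathbf{F}\mathbf{K}.
\]
Substituting $\mathbf{K}^*=-\mathbf{F}^{-1}\mathbf{b}$ yields $\sigma_1^2-2\mathbf{b}^{\top}\mathbf{F}^{-1}\mathbf{b}+\mathbf{b}^{\top}\mathbf{F}^{-1}\mathbf{b}=\sigma_1^2-\mathbf{b}^{\top}\mathbf{F}^{-1}\mathbf{b}$. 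This step uses the symmetry of $\mathbf{F}^{-1}$. Dividing by $\sigma_1^2$ gives the stated formula. Along the way I would note that $\mathbf{K}^*$ is the unique minimizer of this strictly convex quadratic, which identifies $\DcPMM$ as the PMM2 solution $\mathbf{F}\mathbf{K}=-\mathbf{b}$.

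The range $[0,1]$ requires two nondegeneracy assumptions, which I would state: $\sigma_1^2>0$ and $\mathbf{F}$ positive definite. The second means no auxiliary is an a.s.\ linear combination of the others, which is needed for $\mathbf{F}^{-1}$ to exist at all. Given these, $\mathbf{F}^{-1}$ is positive definite, so $\mathbf{b}^{\top}\mathbf{F}^{-1}\mathbf{b}\ge 0$ and hence $g_2\le 1$. For the lower bound $g_2\ge 0$, there are two equivalent routes:

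\begin{itemize}
\item the optimized variance $\sigma_1^2-\mathbf{b}^{\top}\mathbf{F}^{-1}\mathbf{b}$ is itself a variance, hence nonnegative;
\item it is the Schur complement of $\mathbf{F}$ in the positive semidefinite matrix $\boldsymbol\Sigma$, hence nonnegative.
\end{itemize}

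Either way $g_2=1-\rho^2$, where $\rho^2$ is the squared multiple correlation of $\varphi_1$ on $\mathbf{a}$.

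The main obstacle is not algebraic but interpretive. The theorem concerns the oracle estimator with known $H_0$ moments. If $\mathbf{K}^*$ is replaced by a plug-in estimate $\widehat{\mathbf{K}}$, exact unbiasedness can fail at finite $N$. I would handle this by stating that the result is for population weights. I would then add a remark that with consistent $\widehat{\mathbf{K}}$, the plug-in estimator is asymptotically unbiased and attains the same $g_2$. This follows from Slutsky's theorem, because $(\widehat{\mathbf{K}}-\mathbf{K}^*)^{\top}\bar{\mathbf{a}}=o_p(N^{-1/2})$ given $\bar{\mathbf{a}}=O_p(N^{-1/2})$ under $H_0$.
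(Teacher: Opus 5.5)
Your proof is correct and matches the paper's. The paper likewise takes the $H_0$ variance of $\varphi_1-\mathbf{b}^{\top}\mathbf{F}^{-1}\mathbf{a}$ to be the Schur complement $\sigma_1^2-\mathbf{b}^{\top}\mathbf{F}^{-1}\mathbf{b}$ and reads both bounds off non-negativity of covariance matrices; your explicit expansion and nondegeneracy assumptions ($\sigma_1^2>0$, $\mathbf{F}$ positive definite) fill in what it leaves implicit.

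Your caveat on the quartic auxiliaries is genuinely needed. On the $\kappa_3(U)=0$, $R\neq 1$ branch of $H_0$ one has $\Var(X_1)\neq\Var(X_2)$, so $\E_{H_0}[\mathbf{a}]\neq\mathbf{0}$ (for Gaussian components, $\E[X_1^4-X_2^4]=3\{\Var(X_1)^2-\Var(X_2)^2\}\neq 0$) and unbiasedness is no longer guaranteed. Under full exchangeability, by contrast, the swap $X_1\leftrightarrow X_2$ negates $\DcPMM$ and leaves $\widehat{\mathbf{K}}$ fixed, so even the plug-in version is exactly unbiased whenever its mean exists.
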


\begin{proof}
  The adjusted influence function is
  $\varphi_1-\mathbf{b}^{\top}\mathbf{F}^{-1}\mathbf{a}$. Its $H_0$
  variance is $\sigma_1^2-\mathbf{b}^{\top}\mathbf{F}^{-1}\mathbf{b}$,
  the Schur complement of $\mathbf{F}$ in
  \eqref{eq:blockcov}. Non-negativity of covariance matrices gives the
  stated bounds.
\end{proof}

\begin{remark}[Parallel-measures approximation]
  Retaining only $\varphi_4$ as the dominant augmenting term in the
  parallel-measures, standardized-marginals limit gives
  $g_2 \approx 1 - \gamma_3^2 / (2 + \gamma_4)$, with $\gamma_3,
  \gamma_4$ the marginal standardised skewness and excess kurtosis.
  Full-basis simulation gives ARE values up to roughly five times
  larger than this approximation for non-Gaussian $T$, suggesting that
  cross-cumulant terms (which the simplified formula omits) carry the
  bulk of the variance reduction in practice. We therefore use the
  simplified formula only as a local approximation; all reported ARE
  values are taken from the full normal-system computation.
\end{remark}

The conclusion of this section is that, \emph{viewed as an estimator of
$\Delta c_3 = 0$ under $H_0$}, $\DcPMM$ is a perfectly reasonable
construct and achieves the variance reduction Theorem~\ref{thm:g2}
predicts. The empirical confirmation appears in
Section~\ref{sec:results-are}. We now turn to why this is not enough.

\section{Why This Formulation Is Misapplied to Two-Sample Testing}
\label{sec:misapply}

\subsection{The category error}
\label{sec:category}

PMM2 is an instance of the control-variate variance-reduction technique
\citep{lavenberg1981, glynn2002, owen2013}. Its
hypotheses, stated cleanly, are:
\begin{enumerate}
  \item[(C1)] there is a single, fixed data-generating distribution
    $F$ on which all quantities (sample mean of target, sample means
    of auxiliaries, covariances among them) are estimated;
  \item[(C2)] the auxiliary basis $\mathbf{a}$ has \emph{known}
    expectations under $F$, so that $\bar{\mathbf{a}}-\boldsymbol{\mu}_a$
    is centered;
  \item[(C3)] the optimal weights, equivalently the covariance block
    \((\mathbf{F},\mathbf{b})\), are computed under $F$ and applied to
    data from $F$.
\end{enumerate}
These are the standard hypotheses of the classical control-variate
estimator \citep[][Ch.~8]{owen2013}.

The Wiedermann--Shi task does not satisfy (C1)--(C3). It is a two-sample
testing problem:
\begin{enumerate}
  \item[(T1)] the data-generating distribution belongs to a family
    indexed by parameters of interest ($R$, $\kappa_3(U)$, etc.), and
    the test is to decide between two regions of that family;
  \item[(T2)] the auxiliary basis has expectations that \emph{depend on
    which hypothesis is true}; in particular, the antisymmetric
    components of $\mathbf{a}$ are mean-zero under $H_0$ but not under
    $H_1$;
  \item[(T3)] the optimal weights one would actually like to use ---
    weights that optimise discrimination between $H_0$ and $H_1$ ---
    are not the same as the weights that minimise variance at a fixed
    point.
\end{enumerate}
The PMM2 estimator commits to (C2)--(C3) by centering the antisymmetric
auxiliaries at zero (their $H_0$ expectation, which one knows by
exchangeability) and by computing $K^*$ from $H_0$-evaluated covariances
(which one can compute analytically or by simulation). This is the
\emph{only} feasible choice in a testing framework, because using $H_1$-
centerings would require knowing the alternative one is trying to
detect.

Under $H_0$, (C2) holds by symmetry and the construction works. Under
$H_1$, (C2) is silently violated: the auxiliaries are no longer centered
at their true mean, and a deterministic bias enters $\DcPMM$. The
bias is not a finite-sample artefact and does not vanish as $N \to
\infty$.

\subsection{Inconsistency under \texorpdfstring{$H_1$}{H1}}
\label{sec:inconsistent}

We make the failure precise with a single-augmenting-basis instance
($K=2$, retaining only $\varphi_4 = X_1^3 - X_2^3$); the same mechanism
operates with the full three-auxiliary basis.

\begin{proposition}[Inconsistency of $\DcPMM$ under $H_1$]
  \label{prop:inconsistent}
  Let the data follow \eqref{eq:model} with $a_1 = 1$, $a_2 = R$,
  $\kappa_3(U) \neq 0$, so that $H_1$ holds with $\Delta c_3 = R(R-1)
  \kappa_3(U)$. Define $\DcPMM = \bar\varphi_1 + K^* \bar\varphi_4$,
  where $K^* := -\Cov_{H_0}(\varphi_1, \varphi_4)/\Var_{H_0}(\varphi_4)$
  is the $H_0$-population weight. Then
  \begin{equation}
    \E_{H_1}[\DcPMM]
    = \Delta c_3 + K^* (1 - R^3) \kappa_3(U)
    = \kappa_3(U) (R-1) \bigl[\, R - K^*(R^2 + R + 1)\,\bigr].
    \label{eq:inconsistent}
  \end{equation}
  In particular, when \(R\neq 1\) and \(\kappa_3(U)\neq 0\),
  $\E_{H_1}[\DcPMM]=\Delta c_3$ if and only if \(K^*=0\). Any nonzero
  \(H_0\)-optimized weight creates a fixed population gap that does not
  vanish as $N \to \infty$.
\end{proposition}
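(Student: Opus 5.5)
The plan is to use linearity of expectation and reduce everything to two population third moments under the model \eqref{eq:model}. The weight $K^*$ is a deterministic constant, computed from $H_0$ covariances and never re-estimated from the $H_1$ data. So
\[
\E_{H_1}[\DcPMM]=\E_{H_1}[\bar\varphi_1]+K^*\,\E_{H_1}[\bar\varphi_4]=\E_{H_1}[\varphi_1]+K^*\,\E_{H_1}[X_1^3-X_2^3].
\]
Exact $N$-independence follows at once: each term is a sample mean of i.i.d.\ copies, so the expectation does not depend on $N$. I would treat the data as centered at population means, consistent with the population form \eqref{eq:dc3pop}. The first term then equals $\Delta c_3=R(R-1)\kappa_3(U)$ by \eqref{eq:dc3popnew}, which the paper takes as established.

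The only computation is $\E[X_i^3]$. Write $X_i=\lambda T+a_iU+W_i$ with the three summands independent and mean zero. For independent mean-zero variables, third cumulants add and coincide with third moments, so
\[
\E[X_i^3]=\lambda^3\kappa_3(T)+a_i^3\kappa_3(U)+\kappa_3(W_i).
\]
Cancelling the $T$ and $W$ contributions requires $W_1,W_2$ to share a third cumulant. This is the step I expect to need care: the model only fixes equal variances for the $W_i$. I will state that they are identically distributed, or at least have equal $\kappa_3$, as in the W\&S design. With $a_1=1$ and $a_2=R$ this gives $\E_{H_1}[\varphi_4]=(1-R^3)\kappa_3(U)$, which yields the first equality in \eqref{eq:inconsistent}.

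For the factored form, write $1-R^3=-(R-1)(R^2+R+1)$ and pull out $\kappa_3(U)(R-1)$; the bracket $R-K^*(R^2+R+1)$ remains. For the ``if and only if'' claim, the gap is $K^*(1-R^3)\kappa_3(U)$. When $R\neq1$ (with $R$ real, $R^3\neq1$) and $\kappa_3(U)\neq0$, this gap vanishes exactly when $K^*=0$. The gap is $N$-free, while $\bar\varphi_1$ and $\bar\varphi_4$ converge almost surely to their means by the strong law, given finite third moments. Hence $\DcPMM$ converges to $\Delta c_3+K^*(1-R^3)\kappa_3(U)\neq\Delta c_3$, which establishes the claimed inconsistency.
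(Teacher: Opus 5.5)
Your proof is correct and follows the same route as the paper. The paper uses linearity of expectation with the fixed constant $K^*$ and the two model moments $\E_{H_1}[\varphi_1]=R(R-1)\kappa_3(U)$ and $\E_{H_1}[\varphi_4]=(1-R^3)\kappa_3(U)$, then factors using $1-R^3=-(R-1)(R^2+R+1)$. Your cumulant-additivity computation of $\E[X_i^3]$ also usefully surfaces the assumption $\kappa_3(W_1)=\kappa_3(W_2)$, which the paper uses without stating it and which its third-order exchangeability claim under $H_0$ needs as well.
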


\begin{proof}
  Linearity of expectation gives $\E_{H_1}[\DcPMM] =
  \E_{H_1}[\varphi_1] + K^* \E_{H_1}[\varphi_4]$. Under
  \eqref{eq:model}, $\E_{H_1}[\varphi_1] = R(R-1)\kappa_3(U) = \Delta
  c_3$ and $\E_{H_1}[\varphi_4] = (1 - R^3)\kappa_3(U)$. Substituting
  and factoring yields \eqref{eq:inconsistent}; the gap from $\Delta
  c_3$ is $K^*(1 - R^3)\kappa_3(U)$, a population quantity independent
  of $N$.
\end{proof}

\begin{corollary}[Magnitude of attenuation]
  \label{cor:attmag}
  With representative simulation values $R = 2$, $\kappa_3(U) =
  \gamma_U = 2.25$, and $K^* \approx 0.24$ estimated from the Monte
  Carlo runs of Section~\ref{sec:simulations},
  $\E_{H_1}[\DcPMM] = \kappa_3(U)[\,2 - 0.24 \cdot 7\,] = 0.32 \kappa_3(U) \approx 0.72$,
  versus the true $\Delta c_3 = 4.50$. The point estimate is attenuated
  by roughly 84\,\% of its target.
\end{corollary}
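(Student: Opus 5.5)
The corollary is a direct numerical instantiation of Proposition~\ref{prop:inconsistent}, so the plan is to take the closed form \eqref{eq:inconsistent} as given and substitute the representative values. With $R=2$ we have $R(R-1)=2$, $R^2+R+1=7$ and $1-R^3=-7$. The second form of \eqref{eq:inconsistent} therefore becomes
\[
\E_{H_1}[\DcPMM]=\kappa_3(U)\,(2-1)\bigl[\,2-7K^*\,\bigr]=\kappa_3(U)\bigl[\,2-0.24\cdot 7\,\bigr].
\]
The bracket is $2-1.68=0.32$, and with $\kappa_3(U)=2.25$ the value is $0.72$. The target is $\Delta c_3=R(R-1)\kappa_3(U)=2\cdot 2.25=4.50$, taken from \eqref{eq:dc3popnew}.

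For the attenuation statement, I will form the ratio
\[
\frac{\E_{H_1}[\DcPMM]}{\Delta c_3}=\frac{R-K^*(R^2+R+1)}{R}=1-\frac{7K^*}{2}.
\]
Because $\kappa_3(U)$ and $(R-1)$ cancel, the ratio depends only on $K^*$ and $R$. At $K^*=0.24$ it equals $0.16$, so $84\%$ of the target is lost. As a consistency check I will also compute the gap term from the first form of \eqref{eq:inconsistent}: $K^*(1-R^3)\kappa_3(U)=0.24\cdot(-7)\cdot 2.25=-3.78$, and $4.50-3.78=0.72$, which agrees.

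The only non-mechanical point is the status of $K^*\approx 0.24$. In Proposition~\ref{prop:inconsistent}, $K^*$ is the $H_0$-population weight $-\Cov_{H_0}(\varphi_1,\varphi_4)/\Var_{H_0}(\varphi_4)$, whereas here it is a Monte Carlo estimate from Section~\ref{sec:simulations}. I will treat the corollary as exact conditional on that plugged-in value and note that the closed form is affine in $K^*$. Consequently an error $\delta$ in $K^*$ moves the expectation by $7\delta\kappa_3(U)$ and the attenuation fraction by $3.5\delta$, so the qualitative conclusion (severe attenuation) is insensitive to small estimation error in $K^*$.
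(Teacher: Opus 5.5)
Your proposal is correct and takes the paper's route: substitute $R=2$, $\kappa_3(U)=2.25$, $K^*=0.24$ into the closed form \eqref{eq:inconsistent}, then read off the attenuation from the retained-signal fraction $[R-K^*(R^2+R+1)]/R=0.16$, which is how the paper (and Appendix~\ref{app:single}) gets 84\%. Your gap-form cross-check and your remark that the computation is exact for any fixed plugged-in $K^*$ (by linearity of expectation, affine in $K^*$) are sound additions, not departures.
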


Writing the bracket in \eqref{eq:inconsistent} as a retained-signal
fraction, $\text{retained} = [\,R - K^*(R^2+R+1)\,]/R$ and $\text{attenuation}
= 1 - \text{retained}$; at $R = 2$, $K^* \approx 0.24$ this gives
retained $\approx 0.16$ and attenuation $\approx 0.84$, matching the
measured value. More generally, an $H_0$-centered linear augmentation
preserves the target under $H_1$ if and only if its null weight vector is
orthogonal to the alternative-side auxiliary mean:
${\mathbf{K}_0}^{\top}\boldsymbol{\mu}_{a,H_1}=0$. Requiring every
augmenting function to be mean-zero under the operating alternative is a
sufficient, but not necessary, way to satisfy this condition. This is a
direct consequence of linearity of expectation; the Schur-complement
bound of Theorem~\ref{thm:g2} is specific to the proven cumulant case.

The key clause is ``$K^*$ is computed under $H_0$''. Under any sensible
sample-based estimate of $K^*$, which is what implementations actually
use, the result is statistically identical: $\hat K^* \to K^*$ as $N
\to \infty$, and the inconsistency persists.

\subsection{Why not adaptive centering?}
\label{sec:noadapt}

One might propose to estimate $\mu_k^{H_1} = \E_{H_1}[\varphi_k]$ from
data and use it in \eqref{eq:pmm2est}. This collapses the estimator to
the naive one. With sample-mean centering, $\bar\varphi_k -
\widehat{\mu_k} = \bar\varphi_k - \bar\varphi_k = 0$, so the PMM2
``correction'' is identically zero and $\DcPMM = \bar\varphi_1 =
\DcNaive$. Adaptive centering thus removes the bias, but also removes
the variance reduction. The two-stage and profile-likelihood variants
examined during development retain some efficiency but introduce
finite-sample biases of their own; we report only the unmodified PMM2 in
what follows, because that is the construction the methodology
prescribes.

\subsection{The diagnostic in two sentences}
\label{sec:twosent}

PMM2 produces a minimum-variance estimator of a known functional under
a known distribution. It does not produce an estimator that converges
to the correct value under a distribution it has not been calibrated
to. The W\&S task asks exactly the latter question.

\section{Simulation Evidence}
\label{sec:simulations}

\subsection{Design}
\label{sec:simdesign}

We follow the W\&S Simulation~I design, restricted to the cells that
exhibit the diagnostic most cleanly: $R \in \{1, 1.25, 2\}$,
$\gamma_T \in \{0, 2.25\}$, $\gamma_U \in \{0.75, 2.25\}$, $N \in
\{500, 2000, 5000\}$, yielding 36 conditions (Sim I subset). Each
condition uses 300 replications; bootstrap-based power is from 200
replications with $B = 200$ resamples. All runs use seed
\texttt{20260525 + 9}. The parallel-measures DGP fixes $\lambda = 1$,
$\sigma_T = \sigma_W = 1/\sqrt{2}$, $\text{cor}(E_1, E_2) \in \{0,
0.19, 0.38\}$ for $R \in \{1, 1.25, 2\}$ respectively.

For the revision checks we add Wilson binomial intervals for all
reported rejection rates, a targeted bootstrap sensitivity pass
($B=200$ versus $B=1000$, percentile versus BCa), a skew-heavy
Tukey \(g\)-and-\(h\) alternative for \(U\), and a raw
distance-covariance sanity check. These checks are intentionally
representative rather than a new full simulation grid; they test whether
the negative conclusion survives stronger calibration and a heavier
alternative.

\subsection{ARE under \texorpdfstring{$H_0$}{H0} does hold (the half that works)}
\label{sec:results-are}

Table~\ref{tab:are} summarises the ARE results aggregated over $N$ and
$\gamma_U$. ARE is the ratio $\Var(\DcNaive) / \Var(\DcPMM)$, so
values above one indicate variance reduction.

\begin{table}[htbp]
  \centering
  \caption{Mean and median ARE of $\DcPMM$ relative to $\DcNaive$
    across 36 simulation conditions, aggregated by $(R, \gamma_T)$.
    ARE $\geq 1$ in 100\,\% of conditions; ARE $\geq 1.15$ in
    91.7\,\% of $R > 1$ conditions. This part of the picture is
    consistent with Theorem~\ref{thm:g2}: PMM2 \emph{is} a
    lower-variance estimator wherever it is unbiased. These are
    null-regime efficiency values; under $H_1$ the same variance
    reduction co-occurs with the alternative-regime attenuation reported
    below and must not be read as a testing gain.}
  \label{tab:are}
  \small
  \begin{tabular}{cccc}
    \toprule
    $R$ & $\gamma_T$ & Mean ARE & Median ARE \\
    \midrule
    1     & 0     & 1.89 & 1.89 \\
    1     & 2.25  & 4.63 & 4.67 \\
    1.25  & 0     & 1.23 & 1.21 \\
    1.25  & 2.25  & 1.34 & 1.32 \\
    2     & 0     & 5.16 & 4.64 \\
    2     & 2.25  & 4.80 & 4.63 \\
    \bottomrule
  \end{tabular}
\end{table}

We retain Table~\ref{tab:are} deliberately: it is what would normally be
reported as ``PMM2 works''. It does work --- in the sense for which
PMM2 was designed. The next subsection shows why this is not the sense
relevant to the W\&S testing task.

\subsection{Bias and inconsistency under \texorpdfstring{$H_1$}{H1} (the half that fails)}
\label{sec:results-bias}

Table~\ref{tab:bias} reports the central diagnostic: $\E[\DcPMM]$ under
$H_1$ versus the true $\Delta c_3 = R(R-1)\kappa_3(U) = R(R-1)\gamma_U$
(since $\Var(U) = 1$ implies $\kappa_3(U) = \gamma_U$). Values are
obtained by averaging $\DcPMM$ over 300 Monte Carlo replications per
cell from the per-condition Monte Carlo table in the public code
supplement. The cells
required by the specification ($R \in \{1.25, 1.5, 2\}$,
$\gamma_U \in \{0.75, 2.25\}$, $N \in \{500, 2000\}$) are present in
the data for $R \in \{1.25, 2\}$; the $R = 1.5$ row is not in the
existing simulation grid and is therefore omitted with a note rather
than fabricated.

\begin{table}[htbp]
  \centering
  \caption{Bias under $H_1$. Columns: true $\Delta c_3$; sample-mean
    naive estimator $\E[\DcNaive]$; sample-mean PMM2 estimator
    $\E[\DcPMM]$; attenuation $1 - \E[\DcPMM]/\Delta c_3$; rejection
    rates of the two estimators at $\alpha = 0.05$; power loss in
    percentage points (PMM2 minus naive). Each row averages 300
    replications. Boldface marks attenuation $> 50\,\%$. The values are
    reproduced by the verification script in the public code supplement
    from the PMM2 per-condition Monte Carlo table (cells 2--3, 5--6,
    14--15, 17--18, $\gamma_T = 0$).}
  \label{tab:bias}
  \small
  \begin{tabular}{cccrrrrr}
    \toprule
    $R$ & $\gamma_U$ & $N$ & $\Delta c_3$ & $\E[\DcNaive]$ & $\E[\DcPMM]$ & attenuation & $\Delta$power (pp) \\
    \midrule
    1.25 & 0.75 & 500  & 0.234 & 0.297 & 0.172 & 26.5\% & -2.0 \\
    1.25 & 0.75 & 2000 & 0.234 & 0.239 & 0.134 & 42.7\% & -1.0 \\
    1.25 & 2.25 & 500  & 0.703 & 0.649 & 0.219 & \textbf{68.8\%} & -5.5 \\
    1.25 & 2.25 & 2000 & 0.703 & 0.723 & 0.309 & \textbf{56.0\%} & -9.0 \\
    2    & 0.75 & 500  & 1.500 & 1.475 & 0.187 & \textbf{87.5\%} & -29.5 \\
    2    & 0.75 & 2000 & 1.500 & 1.514 & 0.193 & \textbf{87.1\%} & -74.0 \\
    2    & 2.25 & 500  & 4.500 & 4.431 & 0.694 & \textbf{84.6\%} & -74.0 \\
    2    & 2.25 & 2000 & 4.500 & 4.452 & 0.846 & \textbf{81.2\%} & -34.5 \\
    \bottomrule
  \end{tabular}
\end{table}

Three patterns are immediate:

\begin{enumerate}
  \item \textbf{Naive is approximately unbiased}, as theory requires:
    $\E[\DcNaive]$ tracks $\Delta c_3$ to within Monte Carlo error in
    every cell.
  \item \textbf{PMM2 is severely biased toward zero}, and the bias
    grows with $R$ (and weakly with $\gamma_U$), not with $N$. The
    attenuation is on the order of 25\,\% in the mildest cell and
    above 80\,\% in the strongest. This pattern is exactly what
    Proposition~\ref{prop:inconsistent} predicts.
  \item \textbf{Doubling $N$ does not help.} Going from $N = 500$ to
    $N = 2000$ at fixed $(R, \gamma_U)$ either leaves attenuation
    unchanged or, in two cells, increases it slightly. This is the
    operational signature of inconsistency: more data does not move
    $\DcPMM$ toward $\Delta c_3$.
\end{enumerate}

\subsection{Power loss}
\label{sec:results-power}

Table~\ref{tab:power} reports the bootstrap power at $\alpha = 0.05$
for the same conditions aggregated over $N$ and $\gamma_U$. Power loss
is 7--52 percentage points across cells with $R > 1$. The largest
losses occur in the regime where the naive test already has high power
($R = 2$); PMM2 essentially destroys that power.

\begin{table}[htbp]
  \centering
  \caption{Bootstrap power ($\alpha=0.05$) aggregated by $(R, \gamma_T)$
    over $N$ and $\gamma_U$. The public code supplement provides the
    generated summary table and verification script.}
  \label{tab:power}
  \small
  \begin{tabular}{ccccc}
    \toprule
    $R$ & $\gamma_T$ & Power (naive) & Power (PMM2) & $\Delta$ (pp) \\
    \midrule
    1.25 & 0     & 0.157 & 0.085 & $-7.2$ \\
    1.25 & 2.25  & 0.159 & 0.066 & $-9.3$ \\
    2    & 0     & 0.864 & 0.366 & $-49.8$ \\
    2    & 2.25  & 0.853 & 0.334 & $-51.9$ \\
    \bottomrule
  \end{tabular}
\end{table}

Type-I rates under $H_0$ ($R = 1$, four cells) are 0.04--0.09 for
both estimators, within Bradley's robustness interval; PMM2 controls
$H_0$ rejection correctly, which is the part the variance-reduction
machinery covers. The failure is purely in the alternative.

The precision and sensitivity checks in Table~\ref{tab:revisionchecks}
do not change this conclusion. Increasing the bootstrap size to
\(B=1000\) and switching to BCa intervals changes rejection rates by a
few to several percentage points, but it does not restore the lost
\(H_1\) signal. In the strong \(R=2\) cell, for example, the naive
percentile-bootstrap power remains 0.883 (95\% CI [0.778, 0.942]),
whereas PMM2 remains at 0.250 (95\% CI [0.158, 0.372]); the BCa
variant raises PMM2 only to 0.300 (95\% CI [0.199, 0.425]). The
skew-heavy Tukey \(g=0.35,h=0.10\) alternative is even less favourable
to PMM2: at \(R=2,N=2000\), the naive test has power 0.988, while PMM2
has power 0.250 and a mean signal only 0.114 of the naive mean signal.

\begin{table}[htbp]
  \centering
  \caption{Targeted revision diagnostics. Rates are rejection
    proportions with 95\% Wilson intervals in brackets. The dCov row is
    deliberately included as a sanity check: raw distance covariance
    tests observed-score independence, so it rejects even when the
    W\&S error-independence null holds because \(X_1\) and \(X_2\) share
    the true score \(T\).}
  \label{tab:revisionchecks}
  \small
  \begin{tabular}{llcc}
    \toprule
    Check & Setting & Naive / raw baseline & PMM2 \\
    \midrule
    Bootstrap sensitivity & Gamma \(U\), \(R=2,N=500\), \(B=1000\) &
      0.883 [0.778, 0.942] & 0.250 [0.158, 0.372] \\
    BCa sensitivity & Gamma \(U\), \(R=2,N=500\), \(B=1000\) &
      0.900 [0.799, 0.953] & 0.300 [0.199, 0.425] \\
    Heavy-tail sensitivity & Tukey \(g,h\), \(R=2,N=500\) &
      0.663 [0.554, 0.757] & 0.075 [0.035, 0.154] \\
    Heavy-tail sensitivity & Tukey \(g,h\), \(R=2,N=2000\) &
      0.988 [0.933, 0.998] & 0.250 [0.168, 0.355] \\
    Raw dCov sanity & W\&S \(H_0\), common true score, \(N=300\) &
      1.000 [0.901, 1.000] & -- \\
    \bottomrule
  \end{tabular}
\end{table}

\subsection{Synthesis: \texorpdfstring{$H_0$}{H0} correctness is not enough}
\label{sec:synthesis}

A reader who saw only Table~\ref{tab:are} and the standard $H_0$ unit
tests would reasonably conclude that PMM2 ``improves on'' $\DcNaive$.
A reader who additionally sees Table~\ref{tab:bias} sees a different
picture: PMM2 is a lower-variance, $H_0$-unbiased, $H_1$-inconsistent
estimator of $\Delta c_3$. The two diagnostics are not in conflict;
they describe different aspects of the same construct. The relevant
question for the W\&S task is whether the estimator distinguishes $H_0$
from $H_1$ at the nominal level. PMM2 does not, because its $H_1$
distribution is centered far from its $H_0$ distribution by an amount
\emph{smaller} than the naive estimator's, not larger. In testing
language: PMM2 reduces the effect size at the same noise level. ARE
under $H_0$ is irrelevant to that question.

This is not a general failure of PMM. The negative result is restricted
to the $H_0$-centered antisymmetric PMM2 construction used here for the
W\&S cross-cumulant testing task. PMM remains appropriate for the
single-population estimation problems for which its unbiasedness and
variance-reduction assumptions are satisfied; the present failure shows
that those assumptions do not automatically transfer to a two-region
hypothesis test.

\section{PMM3 and PATP Extension Boundaries}
\label{sec:routes}

The PMM2 result raises a narrower question than in an all-methods survey:
does the failure come from the particular third-order PMM2 construction,
or from importing PMM variance reduction into a two-region testing
problem without rechecking alternative-consistency? We keep the answer
inside the PMM/PATP line. The first follow-up is a PMM3-style diagnostic
for the fourth-order W\&S statistic. The second is not a new detector but
a basis-level perspective: PATP supplies a continuous signed-parity basis
family that could support future PMM repairs, provided the same
\(H_1\)-consistency gate is imposed.

\subsection{PMM3-style symmetric correction}
\label{sec:pmm3-route}

PMM3 is the symmetric, third-order member of the same PMM family. In its
classical one-population form it uses the fourth and sixth cumulants to
set the efficiency factor
$g_3=1-\gamma_4^2/(6+9\gamma_4+\gamma_6)$, and its polynomial score has
the form of a cubic correction to a linear estimating equation. Thus,
for the present W\&S problem, PMM3 is not a new likelihood test; it is a
candidate control-variate repair for a symmetric fourth-order cumulant
contrast.

Because the strict PHQ-like operating point is symmetric and
platykurtic ($\kappa_3=0$, $\kappa_4=-1.3$), PMM3 is the natural member
of the PMM family to probe. In this follow-up we did not derive a full
PMM3 two-sample cumulant test. Instead, we used a
PMM3-style antisymmetric control-variate correction for the fourth-order
W\&S statistic $\Delta c_4$, with higher-order augmenting terms chosen
to mimic the symmetric PMM3 repair. The result is informative but
negative. The correction
does reduce variance under $H_0$:
\[
  \frac{\Var(\widehat{\Delta c}_4^{\,\mathrm{naive}})}
       {\Var(\widehat{\Delta c}_4^{\,\mathrm{PMM3}})}
  = 1.127,
\]
and it retains the large-sample $H_1$ signal. Nevertheless, the
practical gain in rejection power is only 0.021, below the 0.05 gate.
With 800 evaluation replications, the naive \(\Delta c_4\) probe has
Type-I rate 0.0525 (95\% CI [0.039, 0.070]), power 0.9575
([0.941, 0.969]), and Gaussian asymmetric-loading nuisance rejection
0.2275 ([0.200, 0.258]). The PMM3-style correction has Type-I 0.0713
([0.055, 0.091]) and power 0.9788 ([0.966, 0.987]), but its nuisance
rejection rises to 0.2950 ([0.264, 0.328]), well above the 0.10 guard.
PMM3 therefore repeats the core PMM2 lesson: variance reduction is real,
but it does not by itself solve the testing problem.

\subsection{PATP as a basis-level extension}
\label{sec:patp-route}

The PATP preprint \citep{Zabolotnii2026PATP} addresses a different but
directly relevant weakness of fixed PMM bases. It supplies the
signed-parity construction, exponent map, and continuous-\(\alpha\)
notation used here as a basis-level reference. Instead of choosing only
integer powers or a small menu of fractional powers, PATP defines a
signed-parity continuous-\(\alpha\) family with basis functions of the
form
\[
  \psi_i(x;\alpha)=\operatorname{sgn}(x)|x|^{p_i(\alpha)},
  \qquad \alpha\in[0,1],
\]
where the exponent map \(p_i(\alpha)\) connects fractional, quasi-linear,
and integer-power regimes. For the present paper, this matters as a
basis-design principle rather than as a reported positive detector.

The PMM2 failure diagnosed above does not say that nonlinear or adaptive
polynomial bases are useless. It says that a basis centered under
\(H_0\), with coefficients optimized for \(H_0\)-variance reduction, can
be inconsistent when the data come from \(H_1\). PATP can broaden and
regularize the candidate basis, but it cannot by itself repair this
inferential mismatch. A PATP-based extension for the W\&S problem would
therefore have to satisfy three gates before being claimed as a new
test: (i) the target cross-cumulant remains \(H_1\)-consistent after
adaptive basis selection; (ii) the centering and weights are estimated or
calibrated under the operating alternative rather than fixed at their
\(H_0\) values; and (iii) the selected \(\alpha\) is stable under
held-out simulation cells, not merely optimal in one Monte Carlo design.

This is why we retain PATP in the manuscript as the natural continuation
of the PMM variance-reduction line: it changes the polynomial basis while
preserving the estimator-centered viewpoint. Broader detector families
that change the inferential object are left to separate work. Including
them here would dilute the central message that PMM-style variance
reduction is not automatically valid for two-region cumulant testing.

\section{Discussion}
\label{sec:discussion}

\noindent\textit{Position in the control-variate literature.}
The diagnostic of Section~\ref{sec:misapply} is, in retrospect, a
clean instance of a known issue in the control-variate literature. The
classical estimator requires the control mean to be \emph{known} under
the operating distribution \citep{glasserman2004montecarlo,
owen2013}; substituting an estimate from a misspecified distribution
introduces bias proportional to the misspecification \citep{nelson1990}.
The PMM2
formulation does not estimate the control mean: it pins it at the
$H_0$ value zero. Under $H_1$ this is a misspecified centering, with
exactly the bias signature \eqref{eq:inconsistent}. What is unusual
about the present setup is that the misspecification is not an
estimator quality issue (more data would not fix it) but a structural
choice driven by the testing framework: in a test, the alternative
parametrization is what one is trying to estimate, so one cannot
center the auxiliaries at it.

\noindent\textit{Bootstrap calibration cannot repair signal attenuation.}
The targeted revision runs show that the negative result is not an
artefact of using only \(B=200\) bootstrap resamples. Moving to
\(B=1000\) stabilizes the percentile intervals, and BCa intervals shift
some rejection rates, especially under \(H_0\), but neither change
restores the \(H_1\) effect size removed by PMM2. This is expected from
Proposition~\ref{prop:inconsistent}: bootstrap calibration changes the
critical interval around the statistic that is supplied to it; it does
not move the statistic's population center back to \(\Delta c_3\).
Studentized bootstrap intervals were not used as a main correction
because they require a nested standard-error estimator for a
third-order cross-cumulant statistic. They may be useful for calibration
studies, but they would still operate after the same PMM2 attenuation has
already occurred.

\noindent\textit{Position in the GMM literature.}
Cast more generally, the W\&S task is a generalized method of moments
problem with two sets of moment conditions:
\begin{align*}
  &g_1(X; \theta) = \varphi_1(X) - \Delta c_3(\theta) \quad \text{(target)} \\
  &g_2(X; \theta) = \varphi_k(X) - \mu_k(\theta), \quad k = 2,\ldots,K \quad \text{(auxiliaries)},
\end{align*}
with the unknown $\theta = (R, \kappa_3(U), \ldots)$
\citep{hansen1982,newey1994,hall2005gmm}. The optimal GMM-style estimator
\emph{jointly} estimates $\theta$ from all moment conditions, with a
weight matrix that depends on $\theta$ (and is typically obtained by
iteration). The PMM2 estimator pre-commits to $\theta \in H_0$ in
computing the weights and centerings, then estimates the target
moment under whatever $\theta$ the data realize. This is the
econometric analogue of a two-stage estimator that does not iterate.
A proper GMM treatment would re-estimate the weights and centerings at
$\hat\theta$ and remove the bias --- at the cost of much of the
analytical tractability that motivated PMM2 in the first place.

\noindent\textit{Why generic dependence tests are not direct baselines.}
Distance covariance and Hoeffding-type tests answer a different null:
whether the observed scores are independent as random variables. In the
parallel-measures model, \(X_1\) and \(X_2\) are not independent even
when the W\&S error-independence null is true, because both contain the
same true score \(T\). The raw distance-covariance sanity check confirms
this mismatch: it rejects in all 35 \(H_0\) replications at \(N=300\).
Such tests are valuable omnibus dependence detectors, but a fair
comparison for the present problem would require residualizing or
otherwise conditioning out the measurement model. That is a different
methodological study, not the PMM boundary diagnosis reported here.

\noindent\textit{Testing statistics and PATP extensions.}
For two-sample testing under the W\&S model, the natural tool is the
score test \citep{rao1948, lehmann2005} or the log-likelihood-ratio
statistic, both of which are explicitly designed to discriminate
between $H_0$ and contiguous alternatives. The PMM2 construction is not
such a statistic. It is a variance-reduced estimator of a moment
functional whose auxiliary centerings are fixed at the null. That is why
the method can be efficient under \(H_0\) and still be misaligned with
the testing objective.

PATP does not change this inferential fact. Its value is more specific:
it gives a controlled way to replace a fixed integer-power PMM basis with
a signed, continuous-\(\alpha\) family \citep{Zabolotnii2026PATP}. In a
future W\&S repair, PATP should therefore be treated as a basis-selection
device inside an explicitly testing-aware estimator, not as a substitute
for the missing alternative-consistency argument. If the centerings are
still pinned to \(H_0\), a richer PATP basis can make the estimator more
efficient for the wrong target. If the centerings are re-estimated or
calibrated jointly with the alternative, PATP becomes a plausible route
for constructing a more flexible PMM-style test. That latter route is a
new paper, not a result established here.

Two repair families are therefore plausible but outside the present
claim. The first is sample splitting or cross-fitting: one fold can
estimate nuisance centerings and weights, while another evaluates the
cross-cumulant signal, preventing the correction from algebraically
collapsing to the naive statistic. The second is a local-alternative or
Pitman-efficiency construction in which the PMM/PATP basis is optimized
for discrimination between \(H_0\) and contiguous \(H_1\) alternatives,
not merely for \(H_0\)-variance reduction. In either case the acceptance
gate must be explicit \(H_1\)-consistency; without that gate, a richer
polynomial basis can become a more efficient estimator of the wrong
testing target.

\noindent\textit{Methodological takeaway: adversarial unit tests.}
The PMM2 implementation passed five unit tests on
release. All five were $H_0$ tests: bias under exchangeability, ARE
gain, pointwise dominance, edge-case handling. The $H_1$ inconsistency
became visible only at the Monte Carlo stage, after weeks of work.
A single adversarial unit test --- ``does $\DcPMM$ converge to the
correct $\Delta c_3$ under a representative $H_1$ DGP as $N$ grows?''
--- would have caught the failure in minutes. The lesson generalises:
when a method is proposed for a testing task, $H_0$-only validation is
insufficient. Acceptance criteria for hypothesis-testing methodology
should include an explicit $H_1$ consistency check.

\noindent\textit{Limitations of the diagnosis.}
The deepest theoretical analysis in this draft concerns PMM2 applied to
the third-order cumulant statistic $\Delta c_3$. The PMM3 material is a
diagnostic fourth-order probe, not a full asymptotic PMM3 testing theory.
The bootstrap and heavy-tail sensitivity checks are representative
stress tests, not a replacement for the full W\&S simulation grid. We
have not investigated whether adaptive-centering, sample-splitting,
cross-fitting, or full GMM variants of PMM2 or PMM3 can be made
consistent under $H_1$ without collapsing to the naive estimator. We also
do not claim a positive PATP test for the W\&S problem. PATP is cited
here as a basis-level extension of the PMM line and as a disciplined
direction for future repairs; its use in this manuscript is conceptual
rather than evidentiary.

\section{Conclusion}
\label{sec:conclusion}

We have established a closed-form criterion for when null-optimized
variance reduction transfers to a two-region test: the null-optimized
weight vector must be orthogonal to the alternative-side mean shift of
the augmenting basis. PMM2 is a sound variance-reduction technique for
single-population parameter estimation, but for the Wiedermann--Shi task
the $H_0$-centered PMM2 cross-cumulant statistic violates this
orthogonality condition. It therefore attains guaranteed null efficiency
(ARE $\geq 1$) yet is inconsistent under $H_1$, losing 7--52 percentage
points of power. A PMM3-style symmetric repair repeats the same
practical lesson: it reduces variance but fails the nuisance-aware
testing gates.

The PATP perspective sharpens rather than softens this conclusion. A
continuous-\(\alpha\), signed-parity polynomial basis is a natural next
extension of the PMM family, but basis adaptivity alone is not enough.
For the W\&S task, any PATP-based repair must be validated against the
same condition that defeated PMM2: convergence to the correct
cross-cumulant under representative alternatives. The broader
methodological prescription is therefore simple: when proposing a
polynomial statistic for a hypothesis-testing task, validation must
include an explicit alternative-consistency check. Variance reduction
under $H_0$ alone is not evidence of fitness for testing.

\section*{Disclosure Statement}

The author reports no competing interests.

\section*{Funding}

No external funding was received for this study.

\section*{Data and Code Availability}

The study is based on simulated data generated from the model described
in the manuscript. A public code supplement is available at
\url{https://github.com/SZabolotnii/Ku-CIT-EI-code-supplement}. It
contains the R scripts \citep{Rcore2024}, generated summary tables,
figures, session snapshots, a verification script for the headline
numerical claims reported in the paper, and a separate revision
experiment script that generates the Wilson-interval, bootstrap
sensitivity, heavy-tail, and distance-covariance sanity tables. The
BRFSS 2010 PHQ-8 material is used only as a public-data contextual
reference to the Wiedermann--Shi application setting and the PHQ
measurement family \citep{kroenke2001phq}; raw BRFSS files and
third-party source articles are not redistributed in the supplement.

\section*{Declaration of Generative AI and AI-Assisted Technologies}

AI-assisted tools were used for language editing, consistency checks, and
submission-package organization. All scientific claims, simulations,
numerical results, and final wording were reviewed and approved by the
author, who takes full responsibility for the content of the manuscript.

\appendix

\section{Single-augmenting-basis derivation}
\label{app:single}

This appendix underlies Proposition~\ref{prop:inconsistent}. We work
through the algebra of the $K = 2$ case (target $\varphi_1$
plus single auxiliary $\varphi_4 = X_1^3 - X_2^3$) to make the
$H_0 \to H_1$ transition transparent. Throughout, $(X_1, X_2)$ are
assumed mean-centered.

\subsection*{Under $H_0$.}

Exchangeability gives $\E_{H_0}[\varphi_1] = 0$ and
$\E_{H_0}[\varphi_4] = 0$. The optimal control-variate weight that
minimises $\Var(\bar\varphi_1 + K \bar\varphi_4)$ subject to no
unbiasedness constraint (because both $\bar\varphi_1$ and
$\bar\varphi_4$ are already mean-zero estimators of zero) is
\[
  K^*
  = -\frac{\Cov_{H_0}(\varphi_1, \varphi_4)}{\Var_{H_0}(\varphi_4)}.
\]
The resulting estimator is
\[
  \DcPMM = \bar\varphi_1 + K^* \bar\varphi_4,
\]
with $\E_{H_0}[\DcPMM] = 0$ and
\[
  \Var_{H_0}(\DcPMM)
  =
  \frac{\Var_{H_0}(\varphi_1)
  \bigl[1 - \rho^2(\varphi_1, \varphi_4)\bigr]}{N},
\]
recovering Theorem~\ref{thm:g2} for the two-element basis.

\subsection*{Under $H_1$.}

The mean of $\varphi_1$ moves to $\Delta c_3 = R(R-1)\kappa_3(U)$,
and the mean of $\varphi_4$ moves to $(1 - R^3)\kappa_3(U)$. Linearity
gives
\begin{align*}
  \E_{H_1}[\DcPMM]
  &= \E_{H_1}[\varphi_1] + K^* \E_{H_1}[\varphi_4] \\
  &= R(R-1)\kappa_3(U) + K^*(1 - R^3)\kappa_3(U) \\
  &= \kappa_3(U) (R-1) \bigl[\, R + K^*\!\cdot\!\tfrac{(1 - R^3)}{(R-1)}\,\bigr] \\
  &= \kappa_3(U) (R-1) \bigl[\, R - K^*(R^2 + R + 1)\,\bigr],
\end{align*}
using $1 - R^3 = -(R-1)(R^2 + R + 1)$. The bracketed expression equals
$\Delta c_3 / [(R-1)\kappa_3(U)] = R$ when $K^* = 0$ (the naive
estimator) and equals zero when $K^* = R/(R^2 + R + 1)$. For any
intermediate value the estimator interpolates linearly between
$\Delta c_3$ and zero. The empirical $\hat K^* \approx 0.24$ at $R = 2$
gives $R - K^*(R^2 + R + 1) = 2 - 0.24 \cdot 7 = 0.32$, consistent
with the 84\,\% attenuation observed in Corollary~\ref{cor:attmag} and
Table~\ref{tab:bias}.

The signature feature is that the bias does not vanish with $N$: it is
a population-level offset proportional to $K^* (1 - R^3) \kappa_3(U)$,
inheriting all the inconsistency of choosing $K^*$ under the wrong
distribution.

\section{Block covariance notation for the implemented PMM2 estimator}
\label{app:gram}

The PMM2 implementation used in the simulations is the three-auxiliary
antisymmetric estimator stated in Table~\ref{tab:basis}. For each Monte
Carlo sample the code computes the centered vectors
\[
  \phi_{1i}=x_{1i}x_{2i}^2-x_{1i}^2x_{2i}, \qquad
  \mathbf{a}_i =
  \begin{pmatrix}
    x_{1i}^3-x_{2i}^3 \\
    x_{1i}^3x_{2i}-x_{1i}x_{2i}^3 \\
    x_{1i}^4-x_{2i}^4
  \end{pmatrix}.
\]
It then estimates the blocks in \eqref{eq:blockcov} by sample
covariances,
\[
  \widehat{\mathbf{F}}
  = \frac{1}{N-1}\sum_i
    (\mathbf{a}_i-\bar{\mathbf{a}})
    (\mathbf{a}_i-\bar{\mathbf{a}})^{\top}, \qquad
  \widehat{\mathbf{b}}
  = \frac{1}{N-1}\sum_i
    (\mathbf{a}_i-\bar{\mathbf{a}})
    (\phi_{1i}-\bar{\phi}_1).
\]
With a small Tikhonov ridge \(\lambda=0.01\), the implemented weight is
\[
  \widehat{\mathbf{K}}
  =
  -(\widehat{\mathbf{F}}+\lambda I)^{-1}\widehat{\mathbf{b}},
  \qquad
  \widehat{\Delta c}_{3}^{\,\mathrm{PMM2}}
  = \bar{\phi}_1 + \widehat{\mathbf{K}}^{\top}\bar{\mathbf{a}}.
\]
This appendix is included to make the indexing reproducible. We do not
use a closed-form Gaussian variance expression in the reported
simulations; all ARE and power values are computed from the generated
Monte Carlo tables.

\section{R Code Availability}
\label{app:code}

The code supplement is available at GitHub:
\begin{center}
\small\url{https://github.com/SZabolotnii/Ku-CIT-EI-code-supplement}.
\end{center}
It contains the R scripts \citep{Rcore2024} for the W\&S naive estimators,
PMM2 comparison, PMM3 diagnostic, and targeted revision experiments. It
also includes the generated CSV tables, figures, and session snapshots
used to verify the reported results.

The supplement includes a headline verification script that checks the
main numerical claims in Tables~\ref{tab:are}, \ref{tab:bias},
\ref{tab:power}, and \ref{tab:revisionchecks}, together with the PMM3
diagnostic values reported in Section~\ref{sec:pmm3-route}. Running this
script recomputes the manuscript-level summary checks from the generated
artifacts without requiring the full Monte Carlo workflow to be rerun.

\bibliographystyle{apalike}
\bibliography{refs}

\end{document}